\theoremstyle{plain}
\newtheorem{theorem}{Theorem}%[section]
\newtheorem{lemma}[theorem]{Lemma}
\theoremstyle{definition}
\theoremstyle{remark}
\def\leqslant{\le}
\def\bq{\begin{eqnarray}}
\def\eq{\end{eqnarray}}
\def\bqq{\begin{eqnarray*}}
\def\eqq{\end{eqnarray*}}
\def\nn{\nonumber}
\def\eps{\varepsilon}
\def \ess {\rm ess}
\def\R{\mathbb{R}}
\def\cE {\mathcal{E}}
\def \d {{\rm d}}
\title{Existence and blow-up of attractive Gross-Pitaevskii minimizers with general bounded potentials}
\author{Thanh Viet Phan \\
\normalsize{Faculty of Mathematics and Statistics, Ton Duc Thang University} \\
\normalsize{Nguyen Huu Tho 19, Ho Chi Minh City, Vietnam}\\
\normalsize{phanthanhviet@tdt.edu.vn} 
}
\date{\normalsize\today}
\begin{document}

\maketitle

%%%%%%%%%%%%%%%%%%%%%%%%%%%%%%%%%
%%%%%%%%%%%%%%%%%%%%%%%%%%%%%%%%%

\begin{abstract} The paper is concerned with the existence and blow-up behavior of the minimizers for the 2D attractive Gross-Pitaevskii functional when the interaction strength increases to a critical value. Our results hold for all bounded external potential satisfying some general assumptions. 

\bigskip
   
\noindent {\bf Keywords:} Bose-Einstein condensation, Gross-Pitaevskii equation, Gagliardo-Nirenberg inequality, concentration-compactness method, blow-up profile.
\end{abstract}

%%%%%%%%%%%%%%%%%%%%%%%%%%%%%%%%
%%%%%%%%%%%%%%%%%%%%%%%%%%%%%%%%

\section{Introduction}

The Bose-Einstein condensation was first observed in 1995 in the Nobel Prize winning works of Cornell, Wieman, and Ketterle \cite{CorWie-95,Ketterle-95} and it has been studied intensively in the last decades due to its various interesting quantum effects such as the superfluidity and the quantized vortices, see e.g. \cite{DalGioPitStr-99,Cooper-08}. It is a remarkable fact that when the interaction is attractive, the condensate may collapse, see e.g.  \cite{BraSacTolHul-95,SacStoHul-98,KagMurShl-98}. In the present paper, we will study the existence and the collapse of the condensate in a specific model.

We will consider a 2D Bose-Einstein condensate with an external potential $V:\R^2\to \R$ and an attractive interaction of strength $a>0$. The condensate is determined by solving the variational problem 
\bq \label{eq:GP}
E_a= \inf_{u\in H^1(\R^2), \|u\|_{L^2} = 1} \cE_a(u)
\eq
where $\cE_a(u)$ is the Gross-Pitaevskii energy functional
$$
\cE_a(u)=\int_{\R^2} \Big( |\nabla u(x)|^2 + V(x)|u(x)|^2 -\frac{a}{2} |u(x)|^4\Big) \d x.
$$
The derivation of the Gross-Pitaevskii functional can be seen in \cite{LewNamRou-15} and references therein. Since $\cE(u)\ge \cE(|u|)$ by the diamagnetic inequality, we can assume that $u\ge 0$ for simplicity. 

When $V=0$, by defining $u_\ell(x)=\ell u(\ell x)$ we have the simple scaling property
$$
\cE_0(u_\ell)= \ell^2 \cE_0(u),  \quad \forall \ell>0.
$$
Therefore, $E_a=-\infty$ if $a>a^*$ and $E_a=0$ if $a\le a^*$, where $a^*$ is the optimal constant in the Gagliardo-Nirenberg inequality:
\bq 
\label{eq:GN} 
\left( \int_{\R^2} |\nabla u(x)|^2 \d x \right)\left( \int_{\R^2} |u(x)|^2 \d x \right) \ge \frac{a^*}{2}  \int_{\R^2} |u(x)|^4 \d x, \quad \forall u\in H^1(\R^2). 
\eq
%From the numerical computation in \cite{Weinstein-83}, we have $a^*= 2\pi \times 1.86225 \cdots$ 
It is well-known (see e.g. \cite{GidNiNir-81,Weinstein-83,MclSer-87}) that
\bq\label{eq:GN1} 
a^*=\int_{\R^2} |Q|^2 = \int_{\R^2}|\nabla Q|^2 = \frac{1}{2} \int_{\R^2}|Q|^4.
\eq
where $Q\in H^1(\R^2)$ is the unique positive radial solution to the nonlinear equation
\bq \label{eq:Q}
-\Delta Q + Q - Q^3 =0.
\eq 
In particular, when $V=0$, $E_a$ has minimizers if and only if $a=a^*$, and all minimizers are of the form $\beta Q_0(\beta x-x_0)$ with $Q_0=Q/\|Q\|_{L^2}$, $\beta>0$ and $x_0\in \R^2$.

When $V \ne 0$, the situation changes crucially. In \cite{GuoSei-14}, Guo and Seiringer showed that for trapping potentials, i.e.  
$$V(x)\ge 0, \quad \lim_{|x|\to \infty}V(x)=\infty,$$
then $E_a$ has a minimizer if and only if $a<a^*$. Moreover, if $V$ has a unique minimizer $x_0 \in \R^2$ and 
\bq \label{eq:V-GuoSei}
\lim_{x\to x_0} \frac{V(x)-V(x_0)}{|x-x_0|^p} = h_0>0, \quad p>0,
\eq
then the minimizer $u_a$ for $E_a$ satisfies  the blow-up behavior
\bq \label{eq:blowup}
\lim_{a\uparrow a^*} \eps_a u_a(x_0+\eps_a x) =Q_0(x) \quad \text{in~} L^2(\R^2),
\eq
where
\bq \label{eq:epsa}
\eps_a= (a^*-a)^{1/(p+2)} \left( \frac{ p h_0}{2} \int_{\R^2} |x|^p |Q(x)|^2 \d x \right)^{-1/(p+2)} .
\eq

In fact, the authors in \cite{GuoSei-14} proved a generalization of \eqref{eq:blowup} when $V$ has finite minimizers, and their result has been extended to other kinds of trapping potentials, see \cite{DenGuoLu-15,GuoWangZengZhou-15,GuoZenZho-16}.

In \cite{Phan-17a}, we proved that if $V$ has a nontrivial negative part, i.e.  
$$
0 \not\equiv \min\{V,0\} \in L^p(\R^2)+L^q(\R^2), \quad 1<p<q<\infty,
$$
then $E_a$ has a minimizer if $a\in (a_*,a^*)$ for some constant $a_*<a^*$. Moreover, if $V$ has a single singular point $x_0$, e.g. 
$$V(x)=-\frac{1}{|x-x_0|^p},\quad 0<p<2,$$ 
then a blow-up result similar to \eqref{eq:blowup}-\eqref{eq:epsa} holds true. 

In the present paper, we are interested in bounded potentials. An important example is the periodic potential, e.g.
$$V(x+z)=V(x), \quad \forall z\in \mathbb{Z}^2,$$
which has been observed in many experiments, see e.g. \cite{BerMol-98,BCCDKP-01,MorObe-06}. The existence and blow-up property of the minimizers for $E_a$ when $a\uparrow a^*$ for continuous, periodic potentials has been solved in \cite{WanZha-16}.

Our aim is to establish the existence and blow-up results for a very general class of bounded potentials, without assuming the periodicity. Our main result is 

\begin{theorem}  \label{thm:main} Let $V\in L^\infty(\R^2,\R)$ satisfy the following two conditions:
\begin{itemize}

\item[\rm (V1)] $\inf \sigma(-\Delta+V)> {\ess}\, {\inf}\, V$;

\item[\rm (V2)] There exists $\eps>0$ such that for all $u\in H^1(\R^2)$ satisfying $\|u\|_{L^2}=1$ and
\bq\label{eq:V2u}\inf_{x\in \R^2} (V*|u|^2)(x)<{\ess}\,{\inf}\,V+\eps,
\eq
the function $x \mapsto (V*|u|^2)(x)$ has (at least) a (global) minimizer on $\R^2$.
\end{itemize}
Then we have the following conclusions.

\begin{itemize}

\item[(i)] (Nonexistence) $E_a=-\infty$ if $a>a^*$ and $E_{a^*}={\ess} \,{\inf} \, V$ but it has no minimizer.

\item [(ii)] (Existence) There exists a constant $a_*\in (0,a^*)$ such that for all $a_*<a<a^*$, the variational problem $E_a$ in \eqref{eq:GP} has (at least) a minimizer. Moreover, if $\{u_n\}$ is a minimizing sequence for $E_a$, then there exist a subsequence of $\{u_n\}$ and a sequence $\{y_n\}\subset \R^2$ such that $u_n(.-y_n)$ converges strongly in $H^1(\R^2)$ to a minimizer for $E_a$.

\item [(iii)] (Blow-up) Assume $a_n \uparrow a^*$ and let $u_n$ be a minimizer for $E_{a_n}$. Then 
$$
\eps_n:= \|\nabla u_n\|_{L^2}^{-1} \to 0.
$$
Moreover, up to a subsequence of $\{u_n\}$, there exists a sequence $\{x_n\} \in \R^2$ such that  
$$
\lim_{n\to \infty}\eps_n u_{n}(\eps_n (x-x_n)) = Q_0(x)\quad \text{strongly in~}H^1(\R^2).
$$
\end{itemize}
\end{theorem}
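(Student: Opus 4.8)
The plan is to handle the three parts in order, with the concentration–compactness analysis of (ii) as the technical core and the blow-up of (iii) reduced to a Gagliardo–Nirenberg rigidity argument. Throughout I abbreviate by $m_0$ the essential infimum of $V$ and set $G(u)=\|\nabla u\|_{L^2}^2-\tfrac a2\|u\|_{L^4}^4$, the translation-invariant part of $\cE_a$.

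For part (i) I would use the scaling $u_\ell(x)=\ell u(\ell x)$, which gives $\cE_a(u_\ell)=\ell^2\bigl(\|\nabla u\|_{L^2}^2-\tfrac a2\|u\|_{L^4}^4\bigr)+\int_{\R^2}V(x/\ell)|u|^2\,\d x$. Taking $u=Q_0$ makes the bracket equal $(1-a/a^*)\|\nabla Q_0\|_{L^2}^2$, which is negative for $a>a^*$; since the potential term is bounded by $\|V\|_{L^\infty}$, letting $\ell\to\infty$ forces $\cE_a\to-\infty$, so $E_a=-\infty$. For $a=a^*$, \eqref{eq:GN} gives $\cE_{a^*}(u)\ge\int V|u|^2\ge m_0$, hence $E_{a^*}\ge m_0$; the matching upper bound comes from $\ell Q_0(\ell(\cdot-y))$ evaluated at a Lebesgue point $y$ with $V(y)$ arbitrarily close to $m_0$, using that $\ell^2|Q_0(\ell\cdot)|^2$ is an approximate identity. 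Any minimiser at $a^*$ would force equality in \eqref{eq:GN}, hence be a translate/dilate of the strictly positive $Q_0$, while the energy identity would also force $|u|^2=0$ a.e.\ on $\{V>m_0\}$; since (V1) excludes $V$ being essentially constant (otherwise $\inf\sigma(-\Delta+V)=m_0$), this set has positive measure, a contradiction.

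For part (ii), \eqref{eq:GN} shows minimising sequences are bounded in $H^1$, the trial functions above give $E_a\downarrow m_0$ as $a\uparrow a^*$, and $a\mapsto E_a$ is continuous and nonincreasing. I would choose $a_*$ so that for $a_*<a<a^*$ one has both $E_a<\inf\sigma(-\Delta+V)$ and $E_a<m_0+\eps$, possible because (V1) gives $m_0<\inf\sigma(-\Delta+V)$. The first inequality rules out vanishing: if $\|u_n\|_{L^4}\to0$ then $\cE_a(u_n)\to\|\nabla u_n\|^2+\int V|u_n|^2\ge\inf\sigma(-\Delta+V)>E_a$, impossible, so $\liminf\|u_n\|_{L^4}>0$. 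Dichotomy I would exclude via the scaling $\inf_{\|u\|_2^2=\mu}\cE_a(u)=\mu E_{a\mu}$ and the strict subadditivity $E_a<mE_{am}+(1-m)E_{a(1-m)}$ for $0<m<1$, which follows from strict monotonicity of $a\mapsto E_a$ on the regime $E_a<\inf\sigma(-\Delta+V)$, itself a consequence of the uniform lower bound $\|u\|_{L^4}\ge c>0$ for near-minimisers there. By concentration–compactness this yields strong $H^1$ convergence of a translate $u_n(\cdot-y_n)\to w$ with $\|w\|_{L^2}=1$.

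The genuinely new difficulty, and the step I expect to be the main obstacle, is that $\cE_a$ is not translation invariant, so the limit $w$ need not itself be a minimiser and the mass could a priori escape to infinity; this is exactly what (V2) repairs. Writing the potential energy of a translate as $(V*\widetilde{|u|^2})(y)$, I would note $\inf_y(V*\widetilde{|u_n|^2})(y)\le\int V|u_n|^2=\cE_a(u_n)-G(u_n)\le\cE_a(u_n)\to E_a$, and since $\widetilde{|u_n(\cdot-y_n)|^2}\to\widetilde{|w|^2}$ in $L^1$ the convolutions converge uniformly, giving $\inf_y(V*\widetilde{|w|^2})(y)\le E_a-G(w)<m_0+\eps$. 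Hence (V2) applies to $w$ and its potential energy attains a global minimum at a finite $y_*$, so $\cE_a(w(\cdot-y_*))=G(w)+\inf_y(V*\widetilde{|w|^2})(y)\le E_a$; being a competitor it also exceeds $E_a$, whence $w(\cdot-y_*)$ is a genuine minimiser and, after relabelling the translations, $u_n(\cdot-y_n)$ converges strongly to a minimiser. It is precisely here that (V2)—which fails when $V$ tends to its essential infimum at infinity—rescues compactness for general bounded $V$.

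For part (iii), I would first show $\eps_n\to0$: if $\|\nabla u_n\|_{L^2}$ stayed bounded along a subsequence, part (ii) would produce, after translation, a strong $H^1$ limit $u$ with $\|u\|_{L^2}=1$ and $\cE_{a^*}(u)=\lim E_{a_n}=m_0=E_{a^*}$, i.e.\ a minimiser at $a^*$, contradicting (i). Then I set $v_n(x)=\eps_n u_n(\eps_n(x-x_n))$, so $\|v_n\|_{L^2}=\|\nabla v_n\|_{L^2}=1$, and rewrite $E_{a_n}=\eps_n^{-2}\bigl(1-\tfrac{a_n}2\|v_n\|_{L^4}^4\bigr)+O(1)$, the $O(1)$ being the bounded potential term $\int V(\eps_n(z-x_n))|v_n|^2$. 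Since $E_{a_n}\to m_0$ while $\eps_n^{-2}\to\infty$, this forces $\|v_n\|_{L^4}^4\to 2/a^*$, so $v_n$ is an asymptotically optimal sequence for \eqref{eq:GN}. A compactness argument for the Gagliardo–Nirenberg inequality (Lions' lemma to recenter, together with the uniqueness in \eqref{eq:Q} of the optimiser) then gives $v_n\to Q_0$ strongly in $H^1$ up to translation, which after folding the extra translation into $x_n$ is the claimed convergence $\eps_n u_n(\eps_n(x-x_n))\to Q_0(x)$; the potential enters only through the lower-order $O(1)$ term and drops out because $\eps_n\to0$.
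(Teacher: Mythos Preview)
Your plan is correct and follows the same overall architecture as the paper: trial states built from $Q_0$ for the energy asymptotics and nonexistence, Lions' concentration--compactness for existence with (V1) killing vanishing and (V2) upgrading the translated limit to a genuine minimiser, and a rescaling to an optimising sequence for Gagliardo--Nirenberg for the blow-up. The one substantive difference is in how you exclude dichotomy. The paper does a direct algebraic estimate: it writes
\[
\cE_a(u_n^{(1)}) \ge (1-\lambda)\lambda\,\inf\sigma(-\Delta+V)+\lambda^2 E_a+o(1)
\]
(and similarly for $u_n^{(2)}$), sums, and reads off $E_a\ge\inf\sigma(-\Delta+V)$, contradicting \eqref{eq:Ea<sigma}. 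You instead go through the classical Lions route, using the scaling $\inf_{\|u\|_2^2=\mu}\cE_a(u)=\mu E_{a\mu}$ and the strict subadditivity $E_a<\lambda E_{a\lambda}+(1-\lambda)E_{a(1-\lambda)}$, which you derive from strict monotonicity of $a\mapsto E_a$ on $\{E_a<\inf\sigma(-\Delta+V)\}$. Both work; the paper's version is shorter and avoids the auxiliary monotonicity step, while yours makes the binding mechanism (strict subadditivity) more transparent and would transfer verbatim to other nonlinearities. Two small points to tighten when you write it out: in part (ii) concentration--compactness first gives only weak $H^1$ and strong $L^p$ convergence of the translates, with strong $H^1$ recovered \emph{a posteriori} once energies match; and in part (iii) you should introduce $v_n(x)=\eps_n u_n(\eps_n x)$ \emph{before} the recentering translation $x_n$, which is produced only after invoking the Gagliardo--Nirenberg compactness lemma.
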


Let us explain the motivation of the above conditions.

\begin{itemize}

\item (V1) is necessary because if 
$$\inf \sigma(-\Delta+V)={\ess}\, {\inf}\, V$$
then $E_a={\ess}\, {\inf}\, V$ but it has no minimizer for all $a< a^*$.  Here, as usual, we denote by ${\ess}\, {\inf}\, V$ the essentially infimum of $V$ and $\inf \sigma(-\Delta+V)$ the infimum of the spectrum of $-\Delta+V$, i.e.
$$ {\ess}\, {\inf}\, V = \sup \{ s\in \R \,|\, V(x) \ge s \text{~for a.e.~} x\in \R^2\},$$
$$ \inf \sigma(-\Delta+V) = \inf_{\|u\|_{L^2}=1} \int_{\R^2} \Big(|\nabla u|^2 + V|u|^2 \Big).$$

\item (V2) is motivated from the fact that if $E_a$ has a minimizer $u$, then since 
$$ \cE_a(u(y-.))\ge \cE_a(u)$$
we get
$$ (V*|u|^2)(y) =\int V(x) |u(y-x)|^2 \d x \ge \int V(x)|u(x)|^2 \d x= (V*|u|^2)(0), \quad \forall y\in \R^2.$$
Note that the function $x\mapsto (V*|u|^2)(x)$ is uniformly continuous and bounded because $V\in L^\infty(\R^2)$ and $|u|^2\in L^1(\R^2)$. 

If $V$ is periodic, then $V*|u|^2$ is also periodic, and hence $(V2)$ holds true. 

Moreover, $(V2)$ holds true for many other functions, for example the sine cardinal (or sampling) function \cite{Stenger-93}
$$
{\rm sinc}(x)=\frac{\sin(|x|)}{|x|}.
$$
Indeed, $\inf_{x\in \R^2} {\rm sinc}(x)\approx -0.217<0$ and ${\rm sinc}(x)\to 0$ as $|x|\to \infty$. Therefore, if $u\in H^1(\R^2)$ satisfies \eqref{eq:V2u} with $\eps>0$ small enough, then the function $f(x)=({\rm sinc}*|u|^2)(x)$ has a global minimizer on $\R^2$  because $f$ is continuous, $f(x)\to 0$ as $|x|\to \infty$ and, by   \eqref{eq:V2u},
$$
\inf_{x\in \R^2} f(x) \le \inf_{x\in \R^2} {\rm sinc}(x)+\eps<0
$$

\end{itemize}

In Section \ref{sec:nonexistence} and \ref{sec:existence}, we prove the nonexistence and existence part using the concentration-compactness method of Lions \cite{Lions-84,Lions-84b}. In Section \ref{sec:blow-up}, we prove the blow-up property by showing that, up to an appropriate modification, the sequence $\{u_n\}$ forms a minimizing sequence for the Gagliardo-Nirenberg inequality \eqref{eq:GN}. 

\section{Nonexistence} \label{sec:nonexistence}

In this section, we prove the nonexistence part of Theorem \ref{thm:main}. As a preliminary step, we recall the following result

\begin{lemma} \label{lem:ea->0} For all $V\in L^\infty(\R^2,\R)$, then 
\bq 
\label{eq:lima}\lim_{a\uparrow a^*}E_a =E_{a^*} = \ess \inf V.
\eq
\end{lemma}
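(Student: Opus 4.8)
The plan is to establish the matching bounds $E_a\ge {\ess}\,{\inf}\,V$ (valid for all $a\le a^*$) and $\limsup_{a\uparrow a^*}E_a\le {\ess}\,{\inf}\,V$, and then to combine them using the monotonicity of $a\mapsto E_a$. Throughout write $m:={\ess}\,{\inf}\,V$. For the lower bound, take any $u\in H^1(\R^2)$ with $\|u\|_{L^2}=1$. The Gagliardo--Nirenberg inequality \eqref{eq:GN} gives $\int_{\R^2}|\nabla u|^2\ge \tfrac{a^*}{2}\int_{\R^2}|u|^4$, while $\int_{\R^2}V|u|^2\ge m$. Hence, for every $a\le a^*$,
$$\cE_a(u)=\int_{\R^2}|\nabla u|^2-\frac a2\int_{\R^2}|u|^4+\int_{\R^2}V|u|^2\ge \Big(1-\frac{a}{a^*}\Big)\int_{\R^2}|\nabla u|^2+m\ge m,$$
so that $E_a\ge m$ for all $a\le a^*$; in particular $E_{a^*}\ge m$.

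For the upper bound I would use the Gagliardo--Nirenberg optimizer. Put $Q_0=Q/\|Q\|_{L^2}$ and, for $x_0\in\R^2$ and $\ell>0$ to be chosen, consider the trial state $u_\ell(x)=\ell\,Q_0(\ell(x-x_0))$, which has $\|u_\ell\|_{L^2}=1$. Using the scaling identities recorded after \eqref{eq:GN1} together with the fact that $Q_0$ saturates \eqref{eq:GN}, the kinetic and quartic terms collapse to
$$\int_{\R^2}|\nabla u_\ell|^2-\frac a2\int_{\R^2}|u_\ell|^4=\Big(1-\frac{a}{a^*}\Big)\ell^2,$$
whereas a change of variables turns the potential term into
$$\int_{\R^2}V|u_\ell|^2=\int_{\R^2}V(x_0+y/\ell)\,|Q_0(y)|^2\,\d y=(V*\psi_\ell)(x_0),\qquad \psi_\ell(z)=\ell^2|Q_0(\ell z)|^2 .$$

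The crux --- and the step I expect to be the main obstacle --- is to drive $(V*\psi_\ell)(x_0)$ down to $m$, since $V$ is merely $L^\infty$ and carries no continuity. Here I would invoke the Lebesgue differentiation theorem: because $V\ge m$ a.e.\ and $|\{V<m+\delta\}|>0$ for every $\delta>0$, one may pick a Lebesgue point $x_0$ of $V$ with $m\le V(x_0)<m+\delta$. Since $Q$ is radially nonincreasing, $|Q_0|^2$ is a radial, radially decreasing, integrable kernel, so $\psi_\ell$ is a genuine approximate identity and $(V*\psi_\ell)(x_0)\to V(x_0)$ as $\ell\to\infty$. Given $\eps>0$, choosing $\delta$ small and then $\ell$ large yields a trial state with $\int_{\R^2}V|u_\ell|^2<m+\eps$.

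Finally I would assemble the limit. For any $a<a^*$ the above gives $E_a\le \cE_a(u_\ell)=(1-a/a^*)\ell^2+\int_{\R^2}V|u_\ell|^2$; letting $a\uparrow a^*$ kills the first term, so $\limsup_{a\uparrow a^*}E_a\le m+\eps$, and $\eps\downarrow 0$ gives $\limsup_{a\uparrow a^*}E_a\le m$. The same trial state at $a=a^*$ gives $E_{a^*}\le m+\eps$, hence $E_{a^*}=m$ by the lower bound. Since $\cE_a(u)$ is nonincreasing in $a$ for fixed $u$ (as $\int|u|^4\ge0$), the map $a\mapsto E_a$ is nonincreasing, so $\lim_{a\uparrow a^*}E_a$ exists and is $\ge E_{a^*}=m$; combined with the $\limsup$ bound this forces $\lim_{a\uparrow a^*}E_a=m=E_{a^*}$, as claimed.
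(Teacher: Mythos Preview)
Your proof is correct and follows essentially the same strategy as the paper: the lower bound via Gagliardo--Nirenberg, and the upper bound via a rescaled $Q_0$ trial state centered at a point where $V$ is close to its essential infimum, with the potential term handled by an approximate-identity/Lebesgue-point argument. The only cosmetic differences are that the paper multiplies the trial state by a compactly supported cutoff $\varphi$ and couples the scale to $a$ via $\ell=(a^*-a)^{-1/4}$, whereas you dispense with the cutoff (using directly that $|Q_0|^2$ is a radially decreasing $L^1$ kernel, hence a good approximate identity at Lebesgue points of $V\in L^\infty$) and decouple the limits by fixing $\ell$ and sending $a\uparrow a^*$ first; both variants yield the same conclusion.
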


\begin{proof} The proof of \eqref{eq:lima} is similar to that in \cite{GuoSei-14,Phan-17a} and we recall it below for the reader's convenience. As in \cite{GuoSei-14} we use the trial function
$$
u(x)=A_\ell \varphi(x-x_0) Q_0(\ell(x-x_0))\ell
$$
where $0\le \varphi \in C_c^\infty (\R^2)$, $\varphi(x)=1$ for $|x|\le 1$, and $A_\ell>0$ is a normalizing factor. Since both $Q_0$ and $|\nabla Q_0|$ are exponentially decay (see \cite[Proposition 4.1]{GidNiNir-81}), we have
\begin{align*}
{A_\ell^{-2}} &=  \int_{\R^2} \varphi^2(x-x_0) |Q_0(\ell(x-x_0))|^2 \ell^2 \d x\\
&=\int_{\R^2} \varphi^2(x/\ell) |Q_0(x)|^2 \d x = 1 + O(\ell^{-\infty})
\end{align*}
and 
\begin{align*}
\int_{\R^2} |\nabla u|^2 - \frac{a}{2}\int_{\R^2}|u|^4 &= \ell^2 \left( \int_{\R^2} |\nabla Q_0|^2 - \frac{a}{2}\int_{\R^2}|Q_0|^4  \right) +O(\ell^{-\infty}) \\
& =  \frac{\ell^2(a^*-a)}{2}  \int_{\R^2}|Q_0|^4 +O(\ell^{-\infty}).
\end{align*}
Here $O(\ell^{-\infty})$ means that this quantity converges to $0$ faster than $\ell^{-k}$ when $\ell\to \infty$ for all $k=1,2,...$ Moreover, when $\ell\to \infty$, since $x\mapsto V(x) |\varphi(x-x_0)|^2$ is integrable and $\ell^2 |Q_0(\ell(x-x_0))|^2$ converges weakly to Dirac-delta function at $x_0$ when $\ell\to \infty$, we have
\begin{align*}
\int_{\R^2} V |u|^2  = {|A_\ell|^2} \int_{\R^2} V(x) |\varphi(x-x_0)|^2  |Q_0(\ell(x-x_0))|^2 \ell^2 \d x \to V(x_0) 
\end{align*}
for a.e. $x_0\in \R^2$. Thus in summary,
$$
E_a \le \cE_a(u) \le \frac{\ell^2(a^*-a)}{2}  \int_{\R^2}|Q_0|^4 + V(x_0)+ O(\ell^{-\infty})
$$
for a.e. $x_0\in \R^2$. By choosing $\ell=(a^*-a)^{-1/4}$ and optimizing over $x_0$, we obtain that
\bqq
\limsup_{a\uparrow a^*}E_a \le {\ess}\, {\inf} V. 
\eqq
On the other hand, by the Gagliardo-Nirenberg inequality \eqref{eq:GN}, $E(a)\ge E(a^*)\ge \ess\inf V$. Thus \eqref{eq:lima} holds true, i.e.
$$ \liminf_{a\uparrow a^*}E_a=E_{a^*}= {\ess}\, {\inf} \,V.$$
\end{proof}

From Lemma \ref{lem:ea->0}, it is easy to deduce the nonexistence part of Theorem \ref{thm:main}. 

\begin{proof}[Proof of Theorem \ref{thm:main} (Nonexistence part)] By assumption (V1), we have $V\not\equiv {\rm constant}$. 

If $E_{a^*}$ has a minimizer $u^*$, then by  \eqref{eq:lima}, we have
$$
{\ess}\,{\inf} \,V = E_{a^*}=\cE_{a^*}(u^*) =  \int_{\R^2} V|u^*|^2+ \Big[ \int_{\R^2} |\nabla u^*|^2-\frac{a^*}{2}\int_{\R^2} |u^*|^4 \Big].
$$
Using
$$
{\ess}\,{\inf} \,V\le \int_{\R^2} V|u^*|^2
$$
and the Gagliardo-Nirenberg inequality \eqref{eq:GN}, we deduce that
\bq \label{eq:abc}
{\ess}\,{\inf} \,V= \int_{\R^2} V|u^*|^2
\eq
and 
\bq \label{xyz}
\int_{\R^2} |\nabla u^*|^2-\frac{a^*}{2}\int_{\R^2} |u^*|^4 =0
\eq
From \eqref{xyz}, we see that $u^*$ is an optimizer for the interpolation inequality \eqref{eq:GN}. This implies that $u$ is equal to $Q_0$ up to translations and dilations. Since $Q_0(x)>0$, we have $|u_0(x)|^2>0$ for all $x\in \R^2$. But in this case \eqref{eq:abc} can not occur except when $V$ is a constant function. This contradiction implies that $E_{a^*}$ has no mimimizer.

Next, we show that $E_a=-\infty$ if $a>a^*$. From \eqref{eq:GN1} and the definition $Q_0=Q/\|Q\|$ we have
$$
\int_{\R^2} |\nabla Q_0|^2  = \frac{a^*}{2}\int_{\R^2}|Q_0|^4=1.
$$
Therefore, with the choice $u_\ell(x)=\ell Q_0(\ell x)$ we get 
\begin{align*}
E_a \le \cE_{a}(u_\ell)&=\ell^2 \int_{\R^2} |\nabla Q_0|^2 + \int_{\R^2} V(./\ell)|Q_0|^2 - \frac{a\ell^2}{2}\int_{\R^2}|Q_0|^4 \\
&\le \ell^2 \Big( 1-\frac{a}{a^*}\Big) + {\ess} \,{ \sup}\, V.
\end{align*}
Since $V$ is bounded and $a>a^*$, we can take $\ell\to \infty$ to conclude that $E_a=-\infty$.
\end{proof}

\section{Existence} \label{sec:existence}

Now we turn to the  existence result in Theorem \ref{thm:main}. The key tool is concentration-compactness argument. For the reader's convenience, we recall the following standard result, which essentially goes back to Lions \cite{Lions-84,Lions-84b}.

\begin{lemma}[Concentration-compactness]\label{lem:concom} Let $N\ge 1$. Let $\{u_n\}$ be a bounded sequence in $H^1(\R^N)$ with $\|u_n\|_{L^2}=1$. Then there exists a subsequence (still denoted by $\{u_n\}$ for simplicity) such that  one of the following cases occurs:

\begin{itemize}

\item[(i)] {\rm (Compactness)} There exists a sequence $\{x_n\}\subset \R^N$ such that $u_n(.+x_n)$ converges strongly in $L^p(\R^N)$ for all $p\in [2,2^*)$. 
\item[(ii)] {\rm (Vanishing)} $u_n\to 0$ strongly in $L^p$ for all $p\in (2,2^*)$. 
\item[(iii)] {\rm (Dichotomy)} There exist $\lambda\in (0,1)$ and two sequences $\{u_n^{(1)}\}$, $\{u_n^{(2)}\}$ in $H^1(\R^N)$ such that 
\[ 
\left \{
\begin{aligned}
  & \lim_{n\to \infty} \int_{\R^N} |u_n^{(1)}|^2 = \lambda,  \quad \lim_{n\to \infty} \int_{\R^N} |u_n^{(2)}|^2 = 1-\lambda,  \\
  & \lim_{n\to \infty} {\rm dist} ({\rm supp} (u_n^{(1)}), {\rm supp} (u_n^{(2)})) =+\infty;\\
  & \lim_{n\to \infty} \|u_n - u_n^{(1)}-u_n^{(2)}\|_{L^p} = 0  , \quad  \forall p\in [2,2^*); \\
  & \liminf_{n\to \infty} \int_{\R^N} (|\nabla u_n|^2 - |\nabla u_n^{(1)}|^2 - |\nabla u_n^{(2)}|^2) \ge 0.
  %& |u_n| \ge |u_n^{(1)}| + |u_n^{(2)}|, \quad \forall n\in \mathbb{N}.
  \end{aligned}
 \right.
\]
\end{itemize}

\end{lemma}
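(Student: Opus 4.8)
The plan is to follow Lions' original argument built on the \emph{Lévy concentration function}. For each $n$ I would set
$$
Q_n(R)=\sup_{y\in\R^N}\int_{B(y,R)}|u_n|^2\,\d x,\qquad R\ge 0,
$$
a nondecreasing function of $R$ with values in $[0,1]$ satisfying $Q_n(R)\to 1$ as $R\to\infty$ (because $\|u_n\|_{L^2}=1$). Since the $Q_n$ are uniformly bounded and monotone, Helly's selection theorem lets me pass to a subsequence along which $Q_n(R)\to Q(R)$ for every $R\ge 0$, with $Q$ nondecreasing; I then define $\alpha:=\lim_{R\to\infty}Q(R)\in[0,1]$. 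The three alternatives of the lemma correspond exactly to $\alpha=0$ (vanishing), $\alpha=1$ (compactness), and $0<\alpha<1$ (dichotomy, with $\lambda=\alpha$).

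I would dispatch the two extreme cases first. If $\alpha=0$, then $\sup_y\int_{B(y,R)}|u_n|^2\to 0$ for every fixed $R$, and the classical vanishing lemma — covering $\R^N$ by balls of radius $R$ with bounded overlap, applying the Gagliardo–Nirenberg/Sobolev inequality on each ball, and summing — forces $u_n\to 0$ in $L^p$ for all $2<p<2^*$, which is alternative (ii). If $\alpha=1$, I would prove tightness: given $\eps>0$, choose $R$ with $Q(R)>1-\eps$, so that for large $n$ there is a center $x_n$ with $\int_{B(x_n,R)}|u_n|^2>1-\eps$. Recentering by $x_n$, boundedness of $\{u_n(\cdot+x_n)\}$ in $H^1$ and the Rellich–Kondrachov theorem give strong local $L^2$ convergence; tightness upgrades this to strong convergence in $L^2(\R^N)$, and interpolation against the uniform $H^1$ (hence $L^{2^*}$) bound yields convergence in $L^p$ for every $2\le p<2^*$, which is alternative (i).

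The core of the argument, and the step I expect to be the main obstacle, is the dichotomy case $0<\alpha<1$, where both the scale selection and the gradient inequality are delicate. Using $Q(R)\uparrow\alpha$ together with $Q_n\to Q$, a diagonal extraction produces a scale $\rho_n\to\infty$ and centers $y_n$ along which $Q_n(\rho_n)\to\alpha$ and $Q_n(2\rho_n)\to\alpha$; choosing $y_n$ to nearly realize $Q_n(\rho_n)$, the annular mass obeys
$$
\int_{B(y_n,2\rho_n)\setminus B(y_n,\rho_n)}|u_n|^2
\le Q_n(2\rho_n)-\int_{B(y_n,\rho_n)}|u_n|^2\longrightarrow\alpha-\alpha=0 .
$$
I would then set $u_n^{(1)}=\chi_n u_n$ and $u_n^{(2)}=(1-\chi_n)u_n$, where $\chi_n$ is a smooth cutoff equal to $1$ on $B(y_n,\rho_n)$, supported in $B(y_n,2\rho_n)$, with $|\nabla\chi_n|\lesssim\rho_n^{-1}$. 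The mass and support-separation claims are then immediate, and since $u_n-u_n^{(1)}-u_n^{(2)}$ is supported in the annulus, the $L^p$ remainder estimate follows from the vanishing annular mass interpolated with the $H^1$ bound.

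The genuinely delicate point is the last inequality. Expanding $\nabla u_n^{(1)}=\chi_n\nabla u_n+u_n\nabla\chi_n$ and $\nabla u_n^{(2)}=(1-\chi_n)\nabla u_n-u_n\nabla\chi_n$ and using $\chi_n^2+(1-\chi_n)^2\le 1$, I get
$$
|\nabla u_n^{(1)}|^2+|\nabla u_n^{(2)}|^2\le |\nabla u_n|^2+2(2\chi_n-1)\,u_n\,\nabla\chi_n\cdot\nabla u_n+2|\nabla\chi_n|^2 u_n^2 .
$$
Both error terms are supported in the annulus; the last one is $\lesssim\rho_n^{-2}\int_{\mathrm{annulus}}u_n^2\to 0$, and the cross term is $\lesssim\rho_n^{-1}\|\nabla u_n\|_{L^2}\,\|u_n\|_{L^2(\mathrm{annulus})}\to 0$ by Cauchy–Schwarz and the uniform gradient bound. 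Integrating and taking $\liminf$ therefore gives $\liminf_n\int(|\nabla u_n|^2-|\nabla u_n^{(1)}|^2-|\nabla u_n^{(2)}|^2)\ge 0$, completing the dichotomy case and hence the proof.
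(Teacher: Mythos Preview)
Your overall strategy coincides with the paper's: both simply follow Lions' concentration-function argument, and the paper does little more than cite Lions and explain how to upgrade his original compactness and vanishing formulations to the $L^p$ statements here. Your handling of the compactness and vanishing cases is correct and essentially the same.

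There is, however, a genuine gap in your dichotomy construction. With a \emph{single} cutoff you set $u_n^{(1)}=\chi_n u_n$ and $u_n^{(2)}=(1-\chi_n)u_n$, so $u_n^{(1)}+u_n^{(2)}=u_n$ identically. Hence the remainder $u_n-u_n^{(1)}-u_n^{(2)}$ is not ``supported in the annulus'' as you write---it is zero everywhere---and, more seriously, the supports of $u_n^{(1)}$ and $u_n^{(2)}$ \emph{overlap} on the annulus $B(y_n,2\rho_n)\setminus B(y_n,\rho_n)$ (wherever $0<\chi_n<1$). Thus $\mathrm{dist}(\mathrm{supp}\,u_n^{(1)},\mathrm{supp}\,u_n^{(2)})=0$ for every $n$, and the support-separation conclusion fails outright. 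This property is not cosmetic: the paper uses disjointness of supports to get $|u_n^{(1)}+u_n^{(2)}|^p=|u_n^{(1)}|^p+|u_n^{(2)}|^p$ in the subsequent energy decomposition.

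The standard repair is to use \emph{two} cutoffs with a growing gap. From the concentration function one extracts (by a further diagonal argument) radii $R_n<R_n'$ with $R_n'-R_n\to\infty$ and $\int_{B(y_n,R_n')\setminus B(y_n,R_n)}|u_n|^2\to 0$. Then take $u_n^{(1)}=\varphi_n u_n$ with $\varphi_n\equiv 1$ on $B(y_n,R_n)$, $\mathrm{supp}\,\varphi_n\subset B(y_n,R_n+1)$, and $u_n^{(2)}=\psi_n u_n$ with $\psi_n\equiv 1$ outside $B(y_n,R_n')$, $\mathrm{supp}\,\psi_n\cap B(y_n,R_n'-1)=\emptyset$. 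Now the supports are separated by at least $R_n'-R_n-2\to\infty$, the remainder is genuinely supported in the vanishing annulus, and your gradient expansion (applied to each cutoff separately, with $|\nabla\varphi_n|,|\nabla\psi_n|\lesssim 1$ and error terms controlled by the annular $L^2$ mass) still yields the required $\liminf$ inequality.
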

Here $2^*$ is the critical power in Sobolev's embedding, i.e. $2^*=2N/(N-2)$ if $N\ge 3$ and $2^*=+\infty$ if $N\le 2$.  
\begin{proof} The result is essentially taken from  \cite[Lemma III.1]{Lions-84}, with some minor modifications that we explain below.

(i) The original notion of the compactness case in \cite[Lemma I.1]{Lions-84} reads
\bq \label{eq:compactness}
\lim_{R\to \infty} \int_{|x|\le R} |u_n(x+x_n)|^2 \d x =1.
\eq
Since $u_n(.+x_n)$ is bounded in $H^1(\R^N)$, up to subsequences $u_n(.+x_n)$ converges weakly to some $u$ in $H^1(\R^N)$. This implies that 
\bq \label{eq:compactness1}\int\limits_{|x| \leqslant R} {{{\left| {{u_n}(. + {x_n})} \right|}^2}}  = \int\limits_{|x| \leqslant R} {{{\left| u \right|}^2}}  + \int\limits_{|x| \leqslant R} {{{\left| {{u_n}(. + {x_n}) - u} \right|}^2}}  + o{(1)_{n \to  + \infty }}.\eq
Moreover, 
\bq \label{eq:compactness2}
{\chi _{\{ |x| \leqslant R\} }}{u_n}(. + {x_n}) \to {\chi _{\{ |x| \leqslant R\} }}u
\eq
strongly in $L^2$, as explained in \cite[Section 8.6]{Lieb-01}.

From (\ref{eq:compactness}), (\ref{eq:compactness1}),(\ref{eq:compactness2}) we obtain that $\|u\|_{L^2}=1$. Hence $u_n(.+x_n)$ converges strongly in $L^2(\R^N)$. 

For any $2\le p \le q$, using the interpolation inequality and the Sobolev's embedding, we have 
\bqq
\|u_n(.+x_n)-u\|_{L^p} &\le& \|u_n(.+x_n)-u\|_{L^2}^\alpha\|u_n(.+x_n)-u\|_{L^q}^{1-\alpha}\\
&\le& \|u_n(.+x_n)-u\|_{L^2}^\alpha \|u_n(.+x_n)-u\|_{H^1}^{1-\alpha}\\
&\le& C\|u_n(.+x_n)-u\|_{L^2}^\alpha,
\eqq
where $\frac{1}{p}=\frac{\alpha}{2}+\frac{1-\alpha}{q}$, $0\le \alpha\le 1.$

Therefore, $u_n(.+x_n)$ converges strongly in $L^p$. 

(ii) The original notion of the vanishing case in \cite[Lemma I.1]{Lions-84} reads
$$
\lim_{R\to \infty} \sup_{y\in \R^N} \int_{|x|\le R} |u_n(x+y)|^2 \d x =0.
$$
This and the boundedness in $H^1$ implies that $u_n\to 0$ strongly in $L^p(\R^N)$ for all $p\in (2,2^*)$, as explained in \cite[Lemma I.1]{Lions-84b}.

(iii) In the dichotomy case, 
the original statement in \cite[Lemma III.1]{Lions-84} has a parameter $\eps\to 0$, but this parameter can be relaxed by a standard Cantor's diagonal argument. %The estimate $|u_n|\ge |u_n^{(1)}|+|u_n^{(2)}|$ did not appear explicitly in the statement of \cite[Lemma III.1]{Lions-84}, but it follows from the proof in \cite{Lions-84}. In fact, $u_n^{(i)}=\xi_i u_n$ for some functions  $\xi_1\ge 0,\xi_2\ge 0$ and $\xi_1+\xi_2\le 1$ (see \cite[Page 137]{Lions-84}).
\end{proof}

\begin{proof}[Proof of Theorem \ref{thm:main} (Existence part)] From Lemma \ref{lem:ea->0} and Assumptions (V1)-(V2), we can find $a_*\in (0,a^*)$ such that 
\bq \label{eq:Ea<sigma}
E_a<\min\Big(\inf \sigma(-\Delta+V), {\ess}\,{\inf}\,  V+\eps\Big), \quad \forall a\in (a_*,a^*).
\eq
where $\eps>0$ is the constant in \eqref{eq:V2u}. We will prove that $E_a$ has a minimizer for all  $a\in (a_*,a^*)$.

Using the boundedness of $V$ and the Gagliardo-Nirenberg inequality \eqref{eq:GN}, we get 
$$
\cE_a(u) \ge \Big(1 - \frac{a}{a^*}\Big) \int_{\R^2}|\nabla u|^2 - \|V\|_{L^\infty}, \quad \forall u\in H^1(\R^2), \int |u|^2=1.
$$
Thus $E_a>-\infty$ and if $\{u_n\}$ is a minimizing sequence for $E_a$, then it is bounded uniformly in $H^1(\R^2)$. By Concentration-Compactness Lemma \ref{lem:concom}, up to a subsequence of $\{u_n\}$, we will obtain either compactness, vanishing, or dichotomy. We will show that the vanishing and dichotomy can not happen.
\\

{\bf No vanishing.} If $\{u_n\}$ is vanishing, then $\|u_n\|_{L^p}\to 0$ for all $p\in (2,\infty)$. Therefore,  
$$
E_a = \lim_{n\to \infty} \cE_a(u_n) \ge \liminf_{n\to \infty} \int (|\nabla u_n|^2 + V|u_n|^2) \ge \inf \sigma(-\Delta+V).
$$
However, this contradicts to the inequality $E_a < \inf \sigma(-\Delta+V)$ in \eqref{eq:Ea<sigma}.\\

{\bf No dichotomy.} Assume the dichotomy occurs. Let $\{u_n^{(1)}\}$, $\{u_n^{(2)}\}$ be the two corresponding sequences. Let us show that
\bq \label{eq:decomp-Ea-un}
\liminf_{n\to \infty} (\cE_a(u_n)- \cE_a(u_n^{(1)}) - \cE_a(u_n^{(2)})) \ge 0.
\eq
Indeed, by Lemma \ref{lem:concom} (iii) we already have
\bq\label{eq:12u1u2}
\liminf_{n\to \infty} \int_{\R^2} (|\nabla u_n|^2 - |\nabla u_n^{(1)}|^2 - |\nabla u_n^{(2)}|^2) \ge 0.
\eq
From (\ref{eq:12u1u2}) and since ${u_n }$ is a bounded sequence in $H^1(\mathbb{R}^N)$, we obtain that $u_n^{(1)}$ and $u_n^{(2)}$ are also bounded in $H^1(\mathbb{R}^N)$. Thus by using the Sobolev's embedding, there exists a constant $C$ such that 
\bqq
max\{\|u_n\|_{L^p},\|u_n^{(1)}\|_{L^p},\|u_n^{(2)}\|_{L^p}\}\le C ~~ ,\forall p\ge 2.
\eqq
Moreover, since $u_n^{(1)}$ and $u_n^{(2)}$ have disjoint supports and $\|u_n-u_n^{(1)}-u_n^{(2)}\|_{L^p}\to 0$ when $n\to \infty$ for all $p\in [2,\infty)$, we find that
\begin{align*}
\int_{\R^2} (|u_n|^4-|u_n^{(1)}|^4-|u_n^{(2)}|^4) =  \int_{\R^2}  (|u_n|^4-|u_n^{(1)}+u_n^{(2)}|^4) \to 0.
\end{align*}
Similarly, since $V$ is bounded, we get 
$$
\int_{\R^2} V (|u_n|^2-|u_n^{(1)}|^2-|u_n^{(2)}|^2) = \int_{\R^2} V (|u_n|^2-|u_n^{(1)}+u_n^{(2)}|^2)  \to  0.
$$
Thus \eqref{eq:decomp-Ea-un} holds true. 

Next, using $\|u_n^{(1)}\|_{L^2}^2\to \lambda$, we obtain
\begin{align*}
\cE_a(u_n^{(1)}) &= \int_{\R^2} \Big( |\nabla u_n^{(1)}|^2 + V|u_n^{(1)}|^2  \Big) -\frac{a}{2} \int_{\R^2} |u_n^{(1)}|^4 + o(1)_{n\to \infty}\\
&= \Big(1-\|u_n^{(1)}\|_{L^2}^2\Big)  \int_{\R^2} \Big( |\nabla u_n^{(1)}|^2 + V|u_n^{(1)}|^2  \Big)  \\
&\qquad \qquad + \|u_n^{(1)}\|_{L^2}^4 \cE_a\left( \frac{u_n^{(1)}}{\|u_n^{(1)}\|_{L^2}}\right) + o(1)_{n\to \infty} \\
&\ge (1-\lambda)\lambda \inf \sigma(-\Delta+V) + \lambda^2 E_a + o(1)_{n\to \infty}. 
\end{align*}
Similarly, using $\|u_n^{(2)}\|_{L^2}^2\to 1-\lambda$, we get
\begin{align*}
\cE_a( u_n^{(2)}) \ge (1-\lambda)\lambda \inf \sigma(-\Delta+V) + (1-\lambda)^2 E_a + o(1)_{n\to \infty}. 
\end{align*}
Inserting these estimates into \eqref{eq:decomp-Ea-un}, we find that
\begin{align*} \cE_a(u_n) &= \cE_a(u_n^{(1)}) + \cE_a(u_n^{(2)}) + o(1)_{n\to \infty} \\
&\ge 2(1-\lambda)\lambda \inf \sigma(-\Delta+V) + \Big(\lambda^2 + (1-\lambda)^2\Big) E_a + o(1)_{n\to \infty}.
\end{align*}
Taking $n\to \infty$ we obtain
\begin{align*} E_a \ge 2(1-\lambda)\lambda \inf \sigma(-\Delta+V) + \Big(\lambda^2 + (1-\lambda)^2\Big) E_a.
\end{align*}
Since $1>\lambda>0$, this leads to $E_a\ge  \inf \sigma(-\Delta+V)$, which contradict to  \eqref{eq:Ea<sigma}.\\

{\bf Compactness.} Thus from Lemma \ref{lem:concom}  we conclude that the compactness occurs, i.e. there exists a sequence $\{x_n\}\subset \R^2$ such that $u_n(.+x_n)$ converges to some $u_0$ weakly in $H^1(\R^2)$ and strongly in $L^p(\R^2)$ for all $p\in [2,\infty)$. Then we have 
$$u_0\in H^1(\R^2), \int |u_0|^2=1$$
and
\begin{align}
\int_{\R^2} |\nabla u_n|^2  &= \int_{\R^2} |\nabla u_n(x+x_n)|^2 \d x \ge  \int_{\R^2} |\nabla u_0|^2 + o(1)_{n\to \infty} ,\label{eq:Fatou-gradient}\\
\int_{\R^2} |u_n|^4  &= \int_{\R^2} |u_n(x+x_n)|^4 \d x =  \int_{\R^2} |u_0|^4 + o(1)_{n\to \infty}.
\end{align}
Moreover, since $V$ is bounded and $u_n(.+x_n)\to u$ strongly in $L^2(\R^2)$, we can write
$$ \int_{\R^2} V |u_n|^2 = \int_{\R^2} V(x+x_n) |u_n(x+x_n)|^2 \d x = \int_{\R^2} V(x+x_n) |u_0(x)|^2 \d x + o(1)_{n\to \infty}.
$$
In summary, 
$$
\cE_a(u_n) \ge \int_{\R^2} \Big(|\nabla u_0(x)|^2 + V(x+ x_n)|u_0(x)|^2 - \frac{a}{2} |u_0(x)|^4 \Big) \d x   + o(1)_{n\to \infty}.
$$
Since $u_n$ is a minimizing sequence, we conclude that
\begin{align} \label{eq:cEa>=cEau0}
E_a \ge  \int_{\R^2} \Big(|\nabla u_0(x)|^2 + V(x+ x_n)|u_0(x)|^2 - \frac{a}{2} |u_0(x)|^4 \Big) \d x   + o(1)_{n\to \infty}.
\end{align}

{\bf Conclusion.} From \eqref{eq:cEa>=cEau0} and the Gagliardo-Nirenberg inequality \eqref{eq:GN}, we obtain
$$
\inf_{y\in \R^2}  \int_{\R^2}V(x+ y)|u_0(x)|^2 \d x \le \liminf_{n\to \infty} \int_{\R^2}  V(x+ x_n)|u_0(x)|^2 \d x \le E_a.
$$
Combining with \eqref{eq:Ea<sigma} we find that
\bq \label{eq:V2ua}
\inf_{y\in \R^2}  \int_{\R^2}V(x+ y)|u_0(x)|^2 \d x \le {\ess}\,{\inf} \, V+\eps
\eq
where $\eps>0$ is the constant in \eqref{eq:V2u}. 

To use Assumption (V2), we introduce the function 
$$v(x):=u_0(-x)$$
which satisfies
\begin{align} \label{eq:V2ub}
(V*|v|^2)(y)&=\int_{\R^2} V(y-x)|v(x)|^2 \d x = \int_{\R^2} V(y+x)|v(-x)|^2 \d x \nn \\
&=  \int_{\R^2} V(y+x)|u_0(x)|^2 \d x.
\end{align}
Thus \eqref{eq:V2ua} is equivalent to
$$
\inf_{y\in \R^2} (V*|v|^2)(y) \le {\ess}\,{\inf} \, V+\eps.
$$
Of course  $v\in H^1(\R^2)$, $\|v\|_{L^2}=1$. Therefore, by Assumption (V2), the function $y\mapsto (V*|v|^2)(y)$ has a global minimizer $x_0\in \R^2$. By \eqref{eq:V2ub}, we obtain
\bq \label{eq:x0-glmin}
\int_{\R^2}V(x+y) |u(x)|^2 \d x \ge \int_{\R^2}V(x+x_0) |u(x)|^2 \d x, \quad \forall y\in \R^2.
\eq

Finally, combining \eqref{eq:x0-glmin} and \eqref{eq:cEa>=cEau0}, we find 
\begin{align*}
E_a &\ge  \int_{\R^2} \Big(|\nabla u_0(x)|^2 + V(x+ x_0)|u_0(x)|^2 - \frac{a}{2} |u_0(x)|^4 \Big) \d x   + o(1)_{n\to \infty}\\
&= \cE_a(u_0(.-x_0))+o(1)_{n\to \infty}.
\end{align*}
Thus, by passing $n\to \infty$, we conclude that $u_0(.-x_0)$ is a minimizer for $E_a$. 

We have already had that $u_n(.+x_n)\to u_0$ weakly in $H^1(\R^2)$ and strongly in $L^2(\R^2)$. Moreover, from the above proof, we see that the equality must occurs in \eqref{eq:Fatou-gradient}, i.e. 
$$
\lim_{n\to \infty} \int_{\R^2} |\nabla u_n(x_n+x)|^2 \d x =  \int_{\R^2} |\nabla u_0(x)|^2 \d x.
$$
Thus we conclude that $u_n(.+x_n)\to u_0$ strongly in $H^1(\R^2)$. Equivalently, $u_n(.+x_n-x_0)\to u(.-x_0)$ strongly in $H^1(\R^2)$.
\end{proof}

\section{Blow-up} \label{sec:blow-up}
%The blow up is proved by adapting \cite[Theorem 1.2]{GuoWangZengZhou-15}. 

In this section, we prove the blow-up part of Theorem \ref{thm:main}. In the original paper of Guo and Seiringer \cite{GuoSei-14}, the blow-up result was proved by a careful analysis of the Euler-Lagrange equation associated to the variational problem $E_a$. This approach has been followed by many other authors, e.g. \cite{DenGuoLu-15,GuoWangZengZhou-15,GuoZenZho-16,WanZha-16}. Here we represent another, much simpler approach which does not use the Euler-Lagrange equation at all. 

%In spite of this fact, comparing with those approaches employed in \cite{GuoSei-14,GuoZenZho-16,GuoWangZengZhou-15}, the disadvantage of the approach
%employed is that it cannot provide the refined information of the blow-up point $x_n$.

The key tool of our approach is the compactness of minimizing sequences for the  Gagliardo-Nirenberg inequality \eqref{eq:GN}.

\begin{lemma} \label{lem:GN-bl} Let $\{f_n\}$ be  a bounded sequence in $H^1(\R^2)$ satisfying
$$ \|\nabla f_n\|_{L^2}= \|f_n\|_{L^2}=1, \quad \frac{a^*}{2} \|f_n\|_{L^4}^4 \to 1.$$
Then there exist a subsequence of $\{f_n\}$ and a sequence $\{x_n\}\subset \R^2$ such that 
$$ \lim_{n\to \infty}f_n(.-x_n)=Q_0\quad \text{strongly in~}H^1(\R^2).$$
\end{lemma}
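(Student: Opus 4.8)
The plan is to feed $\{f_n\}$ into the Concentration-Compactness Lemma~\ref{lem:concom} (with $N=2$, so $2^\ast=\infty$), show that only the compactness alternative can occur, and then identify the resulting limit as a translate of $Q_0$. First I would rule out vanishing: if $f_n\to 0$ in $L^4$ then $\frac{a^\ast}{2}\|f_n\|_{L^4}^4\to 0$, contradicting the hypothesis $\frac{a^\ast}{2}\|f_n\|_{L^4}^4\to 1$.

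The heart of the argument is excluding dichotomy. Suppose it holds, with $\lambda\in(0,1)$ and sequences $f_n^{(1)},f_n^{(2)}$. Exactly as in the existence proof, the asymptotically disjoint supports give $\|f_n\|_{L^4}^4=\|f_n^{(1)}\|_{L^4}^4+\|f_n^{(2)}\|_{L^4}^4+o(1)$, while the gradient estimate in Lemma~\ref{lem:concom}(iii) yields $\|\nabla f_n^{(1)}\|_{L^2}^2+\|\nabla f_n^{(2)}\|_{L^2}^2\le 1+o(1)$. Applying the Gagliardo--Nirenberg inequality \eqref{eq:GN} to each piece, $\frac{a^\ast}{2}\|f_n^{(i)}\|_{L^4}^4\le \|\nabla f_n^{(i)}\|_{L^2}^2\,\|f_n^{(i)}\|_{L^2}^2$, and summing together with $\|f_n^{(1)}\|_{L^2}^2\to\lambda$, $\|f_n^{(2)}\|_{L^2}^2\to 1-\lambda$, I would pass to a further subsequence along which $\|\nabla f_n^{(i)}\|_{L^2}^2\to\tau_i$ with $\tau_1+\tau_2\le1$, obtaining $1\le \tau_1\lambda+\tau_2(1-\lambda)\le\max(\lambda,1-\lambda)<1$, a contradiction. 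This strict \emph{binding} inequality is the main obstacle and the crux of the lemma: it encodes that the optimal constant $a^\ast$ cannot be reached after splitting the mass into two asymptotically separated lumps.

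Hence compactness holds: there is $\{x_n\}\subset\R^2$ with $f_n(\cdot+x_n)\to f$ strongly in $L^p$ for all $p\in[2,\infty)$ and weakly in $H^1(\R^2)$ (along a subsequence). Then $\|f\|_{L^2}=1$ and $\frac{a^\ast}{2}\|f\|_{L^4}^4=\lim \frac{a^\ast}{2}\|f_n\|_{L^4}^4=1$, while weak lower semicontinuity gives $\|\nabla f\|_{L^2}^2\le1$. Feeding these into \eqref{eq:GN},
\[
1=\tfrac{a^\ast}{2}\|f\|_{L^4}^4\le \|\nabla f\|_{L^2}^2\,\|f\|_{L^2}^2=\|\nabla f\|_{L^2}^2\le 1,
\]
so every inequality is an equality: $\|\nabla f\|_{L^2}^2=1$ and $f$ is an optimizer of \eqref{eq:GN}.

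Two things then remain. Since $\|\nabla f_n(\cdot+x_n)\|_{L^2}=1=\|\nabla f\|_{L^2}$ by translation invariance and $\nabla f_n(\cdot+x_n)\rightharpoonup \nabla f$ in $L^2$, convergence of the norms upgrades the weak convergence of gradients to strong convergence in $L^2$; combined with the strong $L^2$ convergence of $f_n(\cdot+x_n)$ this gives strong convergence in $H^1(\R^2)$. Finally, by the classification of equality cases in \eqref{eq:GN} recalled in the introduction, the optimizer $f$ must equal $\beta Q_0(\beta\,\cdot-y_0)$; the normalizations $\|f\|_{L^2}=\|\nabla f\|_{L^2}=1$ together with \eqref{eq:GN1} force $\beta=1$, so $f=Q_0(\cdot-y_0)$ (taking $f\ge0$, as one may by the diamagnetic inequality, fixes the sign). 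Absorbing the translation $y_0$ into the sequence $\{x_n\}$ then yields $f_n(\cdot-x_n)\to Q_0$ strongly in $H^1(\R^2)$, as claimed.
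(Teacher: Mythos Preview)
Your proof is correct and follows essentially the same route as the paper's: apply Lemma~\ref{lem:concom}, rule out vanishing via the $L^4$ hypothesis, rule out dichotomy by applying \eqref{eq:GN} to each piece and exploiting $0<\lambda<1$, and in the compactness case use weak lower semicontinuity plus norm convergence to upgrade to strong $H^1$ convergence and identify the limit via the classification of optimizers. The only cosmetic difference is in the dichotomy step, where the paper bounds $1\ge \min\{1/\lambda,1/(1-\lambda)\}$ while you bound $1\le \max(\lambda,1-\lambda)$; these are the same inequality rearranged.
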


\begin{proof} Let us apply Concentration-Compactness Lemma \ref{lem:concom} to the sequence $\{f_n\}$.  The vanishing does not occurs because $\|f_n\|_{L^4}^4 \to 2/a^*>0$. Now we assume the dichotomy occurs and let $\{f_n^{(1)}\}$, $\{f_n^{(2)}\}$ be two corresponding sequences. From Lemma \ref{lem:concom}, we have 
\begin{align*} 
\|f_n^{(1)}\|_{L^2}^2&=\lambda, \quad \|f_n^{(2)}\|_{L^2}^2=1-\lambda, \quad \lambda\in (0,1),\\
1=\|\nabla f_n\|_{L^2}^2 &\ge \|\nabla f_n^{(1)}\|_{L^2}^2+ \|\nabla f_n^{(2)}\|_{L^2}^2 + o(1)_{n\to \infty} \\
1=\frac{a^*}{2}\|f_n\|_{L^4}^4+o(1)_{n\to \infty} &= \frac{a^*}{2} \Big( \|f_n^{(1)}\|_{L^4}^4+\|f_n^{(2)}\|_{L^4}^4 \Big)+o(1)_{n\to \infty}.
\end{align*}
On the other hand, by the Gagliardo-Nirenberg inequality \eqref{eq:GN},
$$
\|\nabla f_n^{(1)} \|_{L^2}^2 \ge  \frac{a^*}{2\lambda} \|f_n^{(1)}\|_{L^4}^4 , \quad  \|\nabla f_n^{(2)} \|_{L^2}^2 \ge \frac{a^*}{2(1-\lambda)} \|f_n^{(2)}\|_{L^4}^4.
$$
Combining these estimates, we find that
\begin{align*}
1=\|\nabla f_n\|_{L^2}^2 &\ge \liminf_{n\to \infty} \Big(  \|\nabla f_n^{(1)}\|_{L^2}^2+ \|\nabla f_n^{(2)}\|_{L^2}^2 \Big)\\
&\ge  \min\left\{ \frac{1}{\lambda}, \frac{1}{1-\lambda}\right\}  \frac{a^*}{2} \liminf_{n\to \infty} \Big( \|f_n^{(1)}\|_{L^4}^4+\|f_n^{(2)}\|_{L^4}^4 \Big) \\
&=\min\left\{ \frac{1}{\lambda}, \frac{1}{1-\lambda}\right\} .
\end{align*}
However, this is a contradiction because $0<\lambda<1$. Thus the dichotomy does not occur.

Therefore, we obtain the compactness in Lemma \ref{lem:concom}, i.e. there exist a subsequence of $\{f_n\}$ and a sequence $\{x_n\}\subset \R^2$ such that $f_n(.-x_n)$ converges to some function $f$ weakly in $H^1(\R^2)$ and strongly in $L^p(\R^2)$ for all $p\in [2,\infty)$. Then we have
\bq \label{eq:bl-NG}
\|\nabla f\|_{L^2}^2 \le \lim_{n\to \infty} \|\nabla f_n\|_{L^2}^2 =1 = \lim_{n\to \infty}  \frac{a^*}{2} \|f_n\|_{L^4}^4 = \frac{a^*}{2} \|f\|_{L^4}^4.
\eq
In view of the Gagliardo-Nirenberg inequality \eqref{eq:GN} and the constraint $\|f\|_{L^2}=1$, we conclude that
$$
\|\nabla f\|_{L^2}^2= \lim_{n\to \infty} \|\nabla f_n\|_{L^2}^2 =1 
$$
and hence $f_n\to f$ strongly in $H^1(\R^2)$. Moreover, since $f$ is a minimizer for \eqref{eq:bl-NG} and $Q_0$ is the unique minimizer for \eqref{eq:bl-NG} up to translations and dilations, we obtain
$$f(x)=\beta Q_0(\beta x-x_0)$$
for some constant  $\beta>0$ and $x_0\in \R^2$. From (\ref{eq:GN1}) and since $\|\nabla f\|_{L^2}^2=1$, we get $\beta=1$.

Thus $f_n(.-x_n+x_0)\to f(.+x_0)=Q_0$ strongly in $H^1(\R^2)$. 
\end{proof}

Now we finish the proof of Theorem \ref{thm:main}.
\begin{proof}[Proof of Theorem \ref{thm:main} (Blow-up part)] Let $a_n \uparrow a$ and let $u_n$ be a minimizer for $E_{a_n}$. Let us show that 
\bq \label{eq:bl1}
\eps_n  := \|\nabla u_n\|_{L^2}^{-1} \to 0
\eq
as $n\to \infty$. We assume by contradiction that $u_n$ has a subsequence which is bounded in $H^1(\R^2)$. Then by applying Concentration-Compactness Lemma \ref{lem:concom} to this subsequence and following the proof of the existence part, we can show that up to subsequences and translations, $u_n$ converges strongly to a minimizer of $E_{a^*}$. However, it is in contradiction to the fact that $E_{a^*}$ has no minimizer in Lemma \ref{lem:ea->0}.

On the other hand, note that
$$
\cE_{a_n}(u_n) \ge \int_{\R^2} \Big(|\nabla u_n|^2 - \frac{a_n}{2}|u_n|^4 \Big) + {\ess}\, {\inf}\, V \ge {\ess}\, {\inf}\, V 
$$
and, by Lemma \ref{lem:ea->0},
$$
\cE_{a_n}(u_n) = E_{a_n}\to  {\ess}\, {\inf}\, V.
$$
as $n\to \infty$. Thus 
\bq \label{eq:bl2}
\lim_{n\to \infty} \int_{\R^2} \Big(|\nabla u_n|^2 - \frac{a_n}{2}|u_n|^4 \Big)=0.
\eq

From \eqref{eq:bl1} and \eqref{eq:bl2}, we can rescale and find that the sequence
$$
f_n(x):= \eps_n u_n(\eps_n x)
$$
satisfies
$$
\|f_n\|_{L^2}=\|\nabla f_n\|_{L^2}=1, \quad \frac{a^*}{2} \|f_n\|_{L^4}^4 \to 1.
$$
Thus we can apply Lemma \ref{lem:GN-bl} to the sequence $\{f_n\}$. The conclusion is that a subsequence of $\{f_n\}$, there exists a sequence $\{x_n\}\subset \R^2$ such that 
$$
\eps_n u_{n}(\eps_n (x-x_n) )  = f_n(x-x_n) \to Q_0(x) \quad \text{strongly in~} H^1(\R^2).
$$
The proof is complete.
\end{proof}


\begin{thebibliography}{11}

\bibitem{CorWie-95}  {M.H. Anderson, J.R. Ensher, M.R. Matthews, C.E. Wieman, E.A. Cornell}, {Observation of Bose--Einstein condensation in a dilute atomic vapor}, Science 269 (1995), pp. 198--201. 


\bibitem{BerMol-98} K. Berg-S\o rensen, K. M\o lmer, Bose-Einstein condensates in spatially periodic potentials, Phys. Rev. A 58 (1998), p. 1480--1484.

\bibitem{BCCDKP-01} J. C. Bronski, L. D. Carr, R. Carretero-Gonz\'alez, B. Deconinck, J. N. Kutz, K. Promislow, Stability of attractive Bose-Einstein condensates in a periodic potential, Phys. Rev. E 64 (2001), p. 056615


%\bibitem{Bose-24} { S.N. Bose}, {Plancks Gesetz und Lichtquantenhypothese}, Zeitschrift f\"ur Physik 26 (1924), pp. 178--181. 

\bibitem{BraSacTolHul-95} C.C. Bradley, C.A. Sackett, J.J. Tollett, R.G. Hulet, Evidence of Bose--Einstein condensation in an atomic gas with attractive interactions, Phys. Rev. Lett. 75 (1997), p. 1687. 

\bibitem{Cooper-08} N.R. Cooper, Rapidly rotating atomic gases, Adv. Phys. 57 (2008), p. 539--616.

\bibitem{DalGioPitStr-99} F. Dalfovo, S. Giorgini, L.P. Pitaevskii, S. Stringari, Theory of Bose--Einstein condensation
in trapped gases, Rev. Mod. Phys. 71 (1999), p. 463--512.

\bibitem{Ketterle-95} K.B. Davis, M.-O. Mewes, M.R. Andrews, N.J. van Druten, D.S. Durfee, D.M. Kurn, W. Ketterle, {Bose--Einstein condensation in a gas of sodium atoms}, Phys. Rev. Lett. 75 (1995), pp. 3969--3973. 


\bibitem{DenGuoLu-15} Y. Deng, Y. Guo, L. Lu, On the collapse and concentration of Bose--Einstein condensates with inhomogeneous attractive interactions, Calc. Var. Partial Differential Equations 54 (2015), pp. 99--118.

%\bibitem{Einstein-25} {\sc A. Einstein}, {\em Quantentheorie des einatomigen idealen Gases}, Sitzungsberichte der Preussischen Akademie der Wissenschaften 1 (1925), pp. ~3.

\bibitem{GidNiNir-81} B. Gidas, W.M. Ni, L. Nirenberg, Symmetry of positive solutions of nonlinear elliptic
equations in $\mathbb{R}^n$, Mathematical analysis and applications. Part A, Adv. in Math. Suppl. Stud. Vol. 7, Academic Press, New York, 369--402 (1981).


\bibitem{GuoSei-14} Y. Guo and R. Seiringer, On the mass concentration for Bose-Einstein condensates with attractive
interactions, Lett. Math. Phys., 104 (2014), pp. 141--156.

\bibitem{GuoZenZho-16} Y.J. Guo, X.Y. Zeng, H.S. Zhou, Energy estimates and symmetry breaking in attractive Bose--Einstein condensates with ring-shaped potentials, Ann. Inst.  Henri Poincare 33 (2016), pp. 809--828.



\bibitem{GuoWangZengZhou-15} Y. Guo, Z.Q. Wang, X. Zeng, H.S. Zhou, Properties of ground states of attractive Gross-Pitaevskii equations with multi-well potentials, arXiv:1502.01839 (2015)



\bibitem{KagMurShl-98} Y. Kagan, A.E. Muryshev, G.V. Shlyapnikov, Collapse and Bose--Einstein condensation
in a trapped Bose gas with negative scattering length, Phys. Rev. Lett. 81 (1998), p. 933--937.


\bibitem{MclSer-87} K. McLeod, J. Serrin, Uniqueness of positive radial solutions of $\Delta u + f(u) = 0$ in $\mathbb{R}^n$, Arch. Rational Mech. Anal. 99 (1987), p. 115--145.

\bibitem{MorObe-06} O. Morsch, M. Oberthaler, Dynamics of Bose--Einstein condensates in optical lattices, Rev. Mod. Phys. 78 (2006), p. 179--215.


  
\bibitem{LewNamRou-15}  M. Lewin, P.T. Nam, N. Rougerie, The mean-field approximation and the nonlinear Schrodinger functional for trapped Bose gases. Trans. Amer. Math. Soc. 369 (2016), 6131--6157.

%\bibitem{LieLos-01} E. Lieb, M. Loss, Analysis (Graduate Studies in Mathematics), 2nd ed. (2001). 

\bibitem{Lieb-01} E.H. Lieb, M. Loss, Analysis, Second edition, American Mathematical Society Providence, Rhode Island, (2001).

\bibitem{Lions-84} P.L. Lions, The concentration-compactness principle in the calculus of variations. The locally compact case: Part 1, Ann. Inst. Henri Poincar\'e 1 (1984), p. 109--145.

\bibitem{Lions-84b} P.L. Lions, The concentration-compactness principle in the calculus of variations. The locally compact case: Part 2, Ann. Inst. Henri Poincar\'e 1 (1984), p. 223--283.

\bibitem{Phan-17a} T.V. Phan, Blow-up profile of Bose-Einstein condensate with singular potentials, Journal of Mathematical Physics. 58 (2017), http://dx.doi.org/10.1063/1.4995393

\bibitem{SacStoHul-98} C.A. Sackett, H.T.C. Stoof, R.G. Hulet, Growth and collapse of a Bose--Einstein condensate
with attractive interactions, Phys. Rev. Lett. 80 (1998), p. 2031.

\bibitem{Stenger-93} F. Stenger, Numerical Methods Based on Sinc and Analytic Functions. New York: Springer-Verlag, 1993.

\bibitem{Weinstein-83} M. I. Weinstein, Nonlinear Schr\"odinger equations and sharp interpolation estimates, Comm. Math.
Phys., 87 (1983), pp. 567--576.

\bibitem {WanZha-16} Q. Wang, D. Zhao, Existence and mass concentration of 2D attractive Bose--Einstein condensates with periodic potentials, J. Differential Equation (2016), http://dx.doi.org/10.1016/j.jde.2016.11.004

\end{thebibliography}
\end{document}